\documentclass{article}
\makeatletter
\providecommand{\@LN}[2]{}
\providecommand{\@LN@col}[1]{}
\makeatother
\pdfpagewidth=8.5in
\pdfpageheight=11in

\usepackage{ijcai25}
\usepackage{natbib}

\usepackage{times}
\usepackage{soul}
\usepackage{url}
\usepackage[utf8]{inputenc}
\usepackage[small]{caption}
\usepackage{graphicx}
\usepackage{amsmath}
\usepackage{amsthm}
\usepackage{amssymb}
\usepackage{paralist}
\usepackage{booktabs}
\usepackage{mathtools}

\usepackage{hyperref}
\hypersetup{%
  backref=true,
  pagebackref=true, 
  hypertexnames=true,
  urlcolor=blue,
  colorlinks=true,citecolor=green!60!black,linkcolor=red!60!black%
}

\usepackage{algorithm}
\usepackage{algorithmic}
\usepackage[switch]{lineno}
\usepackage{MnSymbol}	%

\usepackage{tabularx}
\usepackage{multirow}
\usepackage{colortbl}
\usepackage{tikz}
\usetikzlibrary{matrix,shapes,arrows,calc,arrows,fit}
\usepackage{rotating}
\usepackage{pifont}

\usepackage{mdframed}

\usepackage[textsize=tiny,textwidth=1.5cm,linecolor=green!70!black, backgroundcolor=green!10, bordercolor=black,disable]{todonotes}
\newcommand{\todoHinline}[1]{}

\urlstyle{same}

\newcommand{\p}{\ensuremath{\mathrm{P}}}
\newcommand{\np}{\ensuremath{\mathrm{NP}}}

\newcommand{\CSR}{\textsc{CSR}}
\newcommand{\CSM}{\textsc{CSM}}

\newtheorem{theorem}{Theorem}

\newtheorem{proposition}{Proposition}
\newtheorem{corollary}{Corollary}
\newtheorem{claim}{Claim}

\newcommand{\decprob}[3]{
  \begin{center}%
    \begin{minipage}{0.9\linewidth}%
      \textsc{#1}\\[0.2ex]
      \textbf{Input:} #2\\[0.2ex]
      \textbf{Question:} #3
    \end{minipage}%
  \end{center} }

\newenvironment{claimproof}[1]{\par\noindent\textit{Proof.}\space#1}{\hfill $\lhd$\bigskip}

\theoremstyle{definition}

\AtBeginEnvironment{example}{%
  \pushQED{\qed}%
}
\AtEndEnvironment{example}{\popQED\endexample}

\pdfinfo{
/TemplateVersion (IJCAI.2025.0)
}

\title{Control in {S}table {M}arriage and {S}table {R}oommates: {C}omplexity and Algorithms}

\author{
Jiehua Chen$^1$
 \and
Ildik\'{o} Schlotter$^2$\\
 \affiliations
 $^1$TU Vienna, Vienna, Austria\\
 $^2$HUN-REN Centre for Economic and Regional Studies,  Budapest, Hungary\\
 \emails
 jiehua.chen@ac.tuwien.ac.at,
 schlotterildi@gmail.com
}

\newcommand{\myemph}[1]{{\color{purple!50!black}\textit{#1}}}

\begin{document}

\maketitle

\begin{abstract}
  We study control problems in the context of matching under preferences: We examine how a central authority, called the \myemph{controller}, can manipulate an instance of the \textsc{Stable Marriage} or \textsc{Stable Roommates} problems in order to achieve certain goals. We investigate the computational complexity of the emerging problems, and provide both efficient algorithms and intractability results. 
\end{abstract}

\section{Introduction}

\def\CA{CA}
\def\A{\mathcal{A}}
\def\G{\mathcal{G}}
\def\AddAg{\mathsf{AddAg}}
\def\DelAg{\mathsf{DelAg}}
\def\DelAcc{\mathsf{DelAcc}}
\def\CI{\mathsf{CapInc}}
\def\CD{\mathsf{CapDec}}
\def\CM{\mathsf{CapMod}}
\def\MA{\mathsf{MA}}
\def\MP{\mathsf{MP}}
\def\SM{\mathsf{MS}}
\def\USM{\mathsf{USM}}
\def\ESM{\exists\mathsf{SM}}
\def\PSM{\exists\mathsf{PSM}}
\def\POSM{\exists\mathsf{POSM}}
\def\etal{\myemph{et al.}}
\def\SR{\textsc{Stable Roommates}}

\SR\ deals with the division of human or non-human agents into pairs,
where  agents have preference lists describing their desires about whom they want to be matched with.
More precisely, an instance of \textsc{Stable Roommates} consists of a set~$U$ of agents,
with each agent~$u \in U$ having a strict preference list over the subset
of the remaining agents that $u$ finds \myemph{acceptable}.
The goal is to decide whether there is a \myemph{stable matching}, that is, a set~$M$ of mutually disjoint pairs of agents for which \myemph{no} blocking pair exists. 
Herein, a \myemph{blocking pair} for~$M$ is a pair of two distinct agents $u$ and $w$ who are not matched together by~$M$ such that
\begin{compactenum}[(i)]
  \item  $u$ is either unmatched by $M$ or prefers $w$ to its partner in~$M$
  and
  \item $w$ is either unmatched by $M$ or prefers $u$ to its partner in~$M$.
\end{compactenum}

Note that each \textsc{Stable Roommates} instance induces an acceptability graph where the vertices correspond to the agents and there is an edge between two vertices if both consider each other acceptable. 
A well-known special case of \textsc{Stable Roommates} called \textsc{Stable Marriage}
is the \myemph{bipartite} variant where  agents are partitioned into two disjoint subsets such that each agent's preference list consists of agents of the opposite subset and can only be assigned a partner from that subset.
The acceptability graph of an instance of \textsc{Stable Marriage} is then bipartite.
Due to a celebrated result by~\citet{gal-sha:j:college-admissions-and-the-stability-of-marriage}, we know that in the marriage setting a stable matching always exists and can be found in time linear in the input size.
However, for the roommates setting, a stable matching is not guaranteed to exist and, more than two decades after the seminal paper by~\citet{gal-sha:j:college-admissions-and-the-stability-of-marriage}, 
\citet{irv:j:efficient-algorithm-for-stable-roommates-problem} provided a polynomial-time algorithm to decide whether a given instance has a stable matching.

\citet{bar-tov-tri:j:manipulating} introduced a framework for considering questions of manipulation and control in the context of elections, and their approach was adopted by
\citet{boe-bre-hee-nie:j:control-bribery-in-SM} to investigate issues of manipulation, control, and bribery in the context of \textsc{Stable Marriage}. 
Among these, we focus on variants of \myemph{control}, which assumes that an external agent called the \myemph{controller} aims to achieve a specific \myemph{goal} by applying as few \myemph{control actions} as possible. 
\citet{boe-bre-hee-nie:j:control-bribery-in-SM} studied three different goals and five actions, yielding 15 different computational problems;
however, in this work we only focus on those that fall under the category of \myemph{control}. In particular, our model assumes that the controller may influence which agents are eligible to participate in the matching, but is not able to change the preferences of the agents. 

The control actions we consider are 
\begin{compactitem}%
\item$\AddAg$: adding agents,
\item $\DelAg$: deleting agents, and 
\item $\DelAcc$: deleting acceptability.
\end{compactitem}
Given an instance of \SR, we focus on the following possible control goals:
\begin{compactitem}
\item $\MA$: a given agent is matched in some stable matching, 
\item $\MP$: a given pair is contained in some stable matching,	
\item $\SM$: a given matching becomes stable, 
\item $\ESM$: a stable matching exists, or 
\item $\PSM$: a perfect and stable matching exists.
\end{compactitem}
Note that \citet{boe-bre-hee-nie:j:control-bribery-in-SM} study the goals~$\MP$, $\SM$, and $\USM$,
where $\USM$ aims at making a given matching become the unique stable matching. 
\citet{ber-csa-kir:j:manipulation-SR} considered the goals~$\ESM$ and $\PSM$.

For each control action~$\mathcal{A} \in \{\AddAg,\DelAg,\DelAcc\}$ and each goal~$\mathcal{G} \in \{\MA, \MP, \SM, \USM, \ESM, \PSM\}$ discussed above, the \textsc{Control for Stable Roommates}-$\A$-$\G$ (abbreviated as \CSR-$\A$-$\G$) problem asks
for a minimum number of control actions necessary to achieve the given goal in a given \SR\ instance.
When restricted to the bipartite marriage setting, we call these problems \textsc{Control for Stable Marriage}-$\A$-$\G$ (or \textsc{CSR}-$\A$-$\G$).

\def\conseq{\textcolor{gray}{\blacklozenge}}
\def\boe{\blacklozenge}
\definecolor{Gray}{gray}{0.85}
\newcommand{\CSMDelAgMA}{{%
{\Pol$^{\conseq}$~[Prop.~\ref{prop:CSM-DelAg-MA}]}}}
\newcommand{\CSMDelAccSM}{{%
{\Pol$^{\conseq}$~[Prop.~\ref{prop:CSM-DelAcc-SM}]}}}
\newcommand{\CSRAddAgExistsSM}{\NPc~[Thm.~\ref{thm:CSR-AddAg-ExistsSM}]}
\newcommand{\CSMAddAgMA}{\NPc~[Thm.~\ref{thm:CSM-AddAg-MA}]}
\newcommand{\CSMAddAgPSM}{\NPc~[Thm.~\ref{thm:CSM-AddAg-MA}]}
\newcommand{\CSRAddAgMA}{\NPc~[Cor.~\ref{cor:CSR-AddAg-MA}]}
\newcommand{\CSRDelAgMP}{\Pol~[Thm.~\ref{thm:CSR-DelAg-MP+MA}]}
\newcommand{\CSRDelAgMA}{\Pol~[Thm.~\ref{thm:CSR-DelAg-MP+MA}]}
\newcommand{\CSRAddAgSM}{\NPc~[Thm.~\ref{thm:CSR-AddAg-SM}]}

\newcommand{\SMDelAgMA}{%
\Pol & [P.~\ref{prop:CSM-DelAg-MA},${\conseq}$]}
\newcommand{\SMDelAccSM}{%
\Pol & [P.~\ref{prop:CSM-DelAcc-SM}, ${\conseq}$]}
\newcommand{\SRAddAgExistsSM}{\NPc & [T.~\ref{thm:CSR-AddAg-ExistsSM}]}
\newcommand{\SMAddAgMA}{\NPc  &[T.~\ref{thm:CSM-AddAg-MA}]}
\newcommand{\SMAddAgPSM}{\NPc & [T.~\ref{thm:CSM-AddAg-MA}]}
\newcommand{\SRAddAgMA}{\NPc & [C.~\ref{cor:CSR-AddAg-MA}]}
\newcommand{\SRDelAgMP}{\Pol  &[T.~\ref{thm:CSR-DelAg-MP+MA}]}
\newcommand{\SRDelAgMA}{\Pol & [T.~\ref{thm:CSR-DelAg-MP+MA}]}
\newcommand{\SRAddAgSM}{\NPc  &[T.~\ref{thm:CSR-AddAg-SM}]}
\newcommand{\boecite}{[$\blacklozenge$]}
\newcommand{\MScite}{[$\heartsuit$]}
\newcommand{\trivialwcite}{\multicolumn{2}{c}{\trivial}}
\newcommand{\tancite}{[$\dag$]}
\newcommand{\BMMcite}{[$\spadesuit$]}
\newcommand{\ABMcite}{[$\star$]}

\newcolumntype{a}{>{}c} 
\def\trivial{---}
\def\NPc{$\np$c}
\def\Pol{in $\p$}
\def\own{\mathbf{A}}
\begin{table*}[t!]
\caption{Complexity results for \textsc{Control in Stable Marriage (\textup{or} Roommates)}-$\A$-$\G$ problems. 
The problems \textsc{CSM}-$\A$-$\ESM$ are omitted, as a stable matching always exists in an instance of SM. 
Results by \protect\citet{boe-bre-hee-nie:j:control-bribery-in-SM} and easy consequences thereof 
are marked with~\boecite\ and with~$\conseq$, respectively.
Results by \protect\citet{tan:j:maximum-stable-matching,tan:j:stable-partitions} are marked with~\tancite, by \protect\citet{mni-sch:j:SM-covering-constraints} with~\MScite, by \protect\citet{abr-bir-man:c:almost-stable-roommates} with~\ABMcite, and by \protect\citet{bir-man-mit:j:size-vs-stability} with~\BMMcite.
``{\NPc}'' stands for $\np$-complete.}
\label{tab:summary-for-SM}
\centering
\begin{tabular}{@{}r r@{\hspace*{1.5mm}}l c r@{\hspace*{1.5mm}}l c r@{\hspace*{1.5mm}}l c r@{\hspace*{1.5mm}}l c r@{\hspace*{1.5mm}}l c r@{\hspace*{1.5mm}}l @{} }
  \toprule
 $\G$ & \multicolumn{2}{@{}c@{}}{\centering CSM-$\AddAg$-$\G$} &  &   \multicolumn{2}{@{}c@{}}{\centering CSR-$\AddAg$-$\G$} &  & \multicolumn{2}{@{}c@{}}{\centering CSM-$\DelAg$-$\G$}  &  &\multicolumn{2}{@{}c@{}}{\centering CSR-$\DelAg$-$\G$}  & &\multicolumn{2}{@{}c@{}}{\centering CSM-$\DelAcc$-$\G$} &   &\multicolumn{2}{@{}c@{}}{\centering CSR-$\DelAcc$-$\G$}\\ \midrule
  $\MA$ & \SMAddAgMA &&  \SRAddAgMA &&  \SMDelAgMA &&  \SRDelAgMA && \quad \NPc & \MScite &&  ~~\NPc & \MScite \\
  $\MP$ & \NPc &  \boecite &&  \NPc & \boecite &&  \Pol & \boecite && \SRDelAgMP && \NPc & \boecite && \NPc & \boecite \\
  $\SM$ &  \Pol &  \boecite &&  \SRAddAgSM && \NPc & \boecite && \NPc & \boecite  &&  \Pol  & \boecite && \SMDelAccSM\\
  $\ESM$ & \trivialwcite &&  \SRAddAgExistsSM && \trivialwcite && \Pol &\tancite && \trivialwcite && \NPc & \ABMcite\\
  $\PSM$ & \SMAddAgPSM &&  \SRAddAgExistsSM &&  \Pol & \tancite && \Pol & \tancite && \NPc & \BMMcite && \NPc & \BMMcite\\
 \bottomrule
\end{tabular}
\end{table*}

\paragraph{Our contributions and related work.}
We answer all open questions for both the marriage and roommates settings that arise from the three control actions and five control goals described above.
Table~\ref{tab:summary-for-SM} gives an overview of the computational complexity of all 27 problems we consider. 

Although \citet{boe-bre-hee-nie:j:control-bribery-in-SM} appear to be the first to explicitly and systematically focus on manipulation, control, and bribery in the context of matching under preferences, the literature contains several earlier work where these problems were studied, usually from some other aspect or with different motivations.
\protect\citet{tan:j:maximum-stable-matching,tan:j:stable-partitions} investigated the problem of finding a \myemph{maximum} stable matching in an instance of \SR, which is a matching that is as large as possible while being stable within the set of matched agents. It is straightforward that this problem is equivalent to \CSR-$\DelAg$-$\SM$.

We remark that some of the problems \textsc{CSR}-$\DelAcc$-$\mathcal{G}$ have been studied in  the context of \myemph{almost stable} matchings
where we aim to obtain a matching that satisfies certain constraints and has as few blocking pairs as possible. 
Namely, an instance of \textsc{Stable Roommates} admits a matching with at most $k$ blocking pairs if and only if we can ensure the existence of a stable matching by $k$ deletions of acceptability; 
\citet{abr-bir-man:c:almost-stable-roommates}
proved that approximating this problem within~$n^{1/2-\varepsilon}$ for any~$\varepsilon>0$ is $\np$-hard, where $n$ is the number of agents.
\citet{bir-man-mit:j:size-vs-stability}
showed even stronger inapproximability for the problem of finding a maximum-cardinality matching that minimizes the number of blocking pairs; %
in fact, their proof holds for the marriage setting where the desired matching is a perfect matching, implying intractability for \textsc{CSM}-$\DelAcc$-$\PSM$.
\citet{mni-sch:j:SM-covering-constraints} examined the question of finding a matching that covers a given set of distinguished agents while respecting a threshold for the number of blocking pairs.
Among others, they showed that even for the marriage setting, it is $\np$-hard to find a matching
that has at most $k$ blocking pairs and  covers a single distinguished agent~\citep[Theorem~5]{mni-sch:j:SM-covering-constraints}, implying hardness for \CSM-$\DelAcc$-$\MA$.

 Further intractability and parameterized results were provided by \citet{che-herm-sor-yed:c:-par-stable} for \CSR-$\DelAcc$-$\MP$, by \citet{gup-jai:c:manipulation-matching} for weighted and destructive variants of many of the problems in Table~\ref{tab:summary-for-SM}, and by 
\citet{kam:j:super-stable-agent-deletion} 
for the problem of finding a super-stable matching by deleting as few agents as possible in a setting where preference can contain ties.

\section{Preliminaries}
\label{sec:prelim}

An instance~$I$ of \SR\ consists of a set~$U$ of agents and a collection $\{\succ_u\}_{u \in U}$ of preferences where each preference list~$\succ_u$ is a strict linear order over the set~$A(u) \subseteq U \setminus \{u\}$ of agents that $u$ finds \myemph{acceptable}.
For instance, $x\succ_u y$ means that $u$ prefers~$x$ to $y$.
We assume that the acceptability relation is symmetric, meaning that $x$ finds $y$ acceptable if and only if $y$ finds $x$ acceptable.
For the acceptability relation of each instance~$I$, there exists an \myemph{associated undirected graph~$G=(U,E)$} over the set of agents such that for each two agents there is 
an edge in~$E$ connecting them if and only if they find each other acceptable.
A \myemph{matching} in~$I$ is a subset~$M \subseteq E$ of pairwise disjoint edges.
For some pair~$\{u,v\} \in M$, we will use the notation $M(u)=v$ and $M(v)=u$ and say that $M$ \myemph{covers}~$u$ and $v$. 
We write $M(u)=\emptyset$ to indicate that $u$ is \myemph{unmatched} by~$M$, that is, $M$ does not cover agent~$u$.
A \myemph{blocking pair} for some matching~$M$ in~$I$ is a pair~$\{u,v\} \in E$ of agents such that 
(i) either $M(u)=\emptyset$ or $v \succ_u M(u)$, and similarly,
(ii) either $M(v)=\emptyset$ or $u \succ_v M(v)$.
A matching is \myemph{stable} if it admits no blocking pair.

Given an instance~$I=(U,(\succ_u)_{u\in U})$ of \SR\ and a subset~$W$ of agents, we use \myemph{$I-W$} to refer to the instance obtained by \myemph{deleting the agents} in~$W$ from~$I$, that is, 
$I-W \coloneqq (U \setminus W,(\succ'_u)_{u\in U \setminus W})$ where $\succ'_u$ is the restriction of~$\succ_u$ to~$U \setminus W$.
Similarly, given a subset~$F \subseteq E$ of agent pairs,
the instance obtained from~$I$ by \myemph{deleting the acceptability} for the pairs in~$F$ is
\myemph{$I-F$} $\coloneqq (U,(\succ'_u)_{u\in U})$, where for each agent~$u \in U$, the set of acceptable agents becomes $A'(u)\coloneqq A(u) \setminus \{v\colon \{u,v\} \in F\}$
and thus $\succ'_u$ is the restriction of~$\succ_u$ to $A'(u)$.

\subsection{Problem definitions.}
Let us now formally define the control problem \CSR-$\A$-$\G$ for
each control action~$\A \in \{\AddAg,\DelAg,\DelAcc\}$ and each goal~$\G \in \{\MA, \MP, \SM, \ESM,\PSM\}$.

\paragraph{Adding agents.} First, let us focus first on the control action of agent addition~$\AddAg$.
For this control action, the input not only contains the \myemph{original} set~$U$ of agents but also a set of \myemph{addable}~$U'$ agents, and the preferences of the agents are given over the whole market over $U \cup U'$.
For the control goals~$\ESM$ and~$\PSM$, the definitions are as follows:

\decprob{\CSR-$\AddAg$-$\ESM$ (\normalfont{resp.} \CSR-$\AddAg$-$\PSM$)}{
  An instance~$I=(U\cup U',(\succ_u)_{u\in U\cup U'})$ of \SR, where $U$
  denotes the set of \myemph{original} agents and $U'$ the set of \myemph{addable} agents, and an integer \myemph{budget}~$\ell$.}
{
  Is there a subset~$W\subseteq U'$ of at most~$\ell$ agents such that $I-(U'' \setminus W)$ 
  admits a stable (resp., \myemph{perfect} and stable) matching?
}

By contrast, the control goals $\MA$ and $\MP$ require an additional input:

\decprob{\CSR-$\AddAg$-$\MA$ (resp.\ \CSR-$\AddAg$-$\MP$)}{
  An instance~$I=(U\cup U',(\succ_u)_{u\in U\cup U'})$ of \textsc{Stable Roommates}, %
  an integer \myemph{budget}~$\ell$, and 
  an agent~${u^* \in U}$ (resp., a pair $e^* \in E$ of agents).}
{
  Is there a subset~$W \subseteq U$ of at most~$\ell$ agents such that $I-(U'' \setminus W)$ admits a stable matching~$M'$
  that covers~$u^*$ (resp., contains~$e^*$)?
}

Finally, for the control goal~$\SM$, following \citet{boe-bre-hee-nie:j:control-bribery-in-SM}, the additional input is a \myemph{perfect} matching, which not only covers the original but also the addable agents.
The reason for this is that the control action~$\AddAg$ may alter the agent set of an instance.

\decprob{\CSR-$\AddAg$-$\SM$}{
  An instance~$I=(U\cup U',(\succ_u)_{u\in U\cup U'})$ of \textsc{Stable Roommates}, %
  an integer \myemph{budget}~$\ell$, and  
  a perfect matching~$M$ of $I$.}
{
  Is there a subset~$W \subseteq U$ of at most~$\ell$ agents such that $I-(U'' \setminus W)$ admits a stable matching~$M'$
  with $M'\subseteq M$?
}

\paragraph{Deleting agents.}
The control problems regarding the control action of deleting agents are defined analogously, where for $\SM$, we again assume the input matching to be perfect: 

\decprob{\CSR-$\DelAg$-$\SM$}{
  An instance~$I=(U,(\succ_u)_{u\in U})$ of \SR,
  an integer \myemph{budget}~$\ell$,
  and a perfect matching~$M$ of~$I$.}
{
  Is there a subset~$W \subseteq U$ of at most~$\ell$ agents such that $I-(U'' \setminus W)$ admits a stable matching~$M'$
  with $M'\subseteq M$?
}
We omit the other problem definitions for the sake of brevity.

\paragraph{Deleting acceptability.} The control problems \CSR-$\DelAcc$-$\G$ are defined analogously: instead of a set of at most~$\ell$ agents $W \subseteq U$, we ask for a set of at most~$\ell$ edges~$F \subseteq E$ so that the instance~$I-F$ satisfies our goals.
We remark that, contrasting the previous two control actions, the input matching for \CSR-$\DelAcc$-$\SM$ is simply a matching:

\decprob{\CSR-$\DelAcc$-$\SM$}{
  An instance~$I=(U,(\succ_u)_{u\in U})$ of \SR,
  an integer \myemph{budget}~$\ell$,
  and a matching~$M$ in~$I$.}
{
  Is there a subset~$F$ of at most~$\ell$ acceptable pairs such that $M$ is a stable matching in $I-F$?
}
We omit the other problem definitions for the sake of brevity.

The marriage variants of the control problems are defined analogously, with the input containing a \textsc{Stable Marriage} instance instead of a \SR\ instance.

\section{Straightforward Consequences of Known Results}
In this section, we explain some results that 
can be derived as easy consequences of the work by \citet{boe-bre-hee-nie:j:control-bribery-in-SM}.
\label{sec:app-SM-conseq}

\begin{proposition}
\label{prop:CSM-DelAg-MA}
\textsc{CSM}-$\DelAg$-$\MA$ is in $\p$.
\end{proposition}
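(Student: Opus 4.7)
The plan is to reduce \CSM-$\DelAg$-$\MA$ in polynomial time to polynomially many instances of \CSM-$\DelAg$-$\MP$, relying on the result of \citet{boe-bre-hee-nie:j:control-bribery-in-SM} that \CSM-$\DelAg$-$\MP$ is in $\p$. The key observation is that in any stable matching covering the distinguished agent $u^*$, this agent must be paired with one of the (polynomially many) agents it finds acceptable, so we can simply enumerate all possible partners.

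Given an input $(I, u^*, \ell)$ of \CSM-$\DelAg$-$\MA$, I would proceed as follows. For each agent $w \in A(u^*)$, invoke the polynomial-time algorithm of \citet{boe-bre-hee-nie:j:control-bribery-in-SM} for \CSM-$\DelAg$-$\MP$ on the instance $(I, \{u^*, w\}, \ell)$, and answer YES if and only if at least one of these calls returns YES. Correctness follows from two easy directions: (i) if $(I, u^*, \ell)$ is a yes-instance witnessed by a set $W$ of at most $\ell$ agents together with a stable matching $M'$ in $I - W$ covering $u^*$, then setting $w := M'(u^*) \in A(u^*)$, the same set $W$ witnesses that $(I, \{u^*, w\}, \ell)$ is a yes-instance of \CSM-$\DelAg$-$\MP$ (note that $u^* \notin W$ and $w \notin W$, as both are covered by $M'$); and (ii) conversely, any stable matching of $I - W$ containing $\{u^*, w\}$ automatically covers $u^*$.

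Since $|A(u^*)| \le |U|$, the algorithm solves only polynomially many subproblems, each in polynomial time, yielding an overall polynomial-time procedure. There is no real obstacle in this argument: the whole proof amounts to the simple remark that one can branch over the polynomially many acceptable partners of $u^*$ and then defer the heavy lifting to the existing algorithm for \CSM-$\DelAg$-$\MP$. The only minor point to verify is that the budget is preserved across the reduction, which is immediate since the same set $W$ is used as the witness in both directions.
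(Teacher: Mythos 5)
Your proposal is correct and follows essentially the same route as the paper's own proof: enumerate the acceptable partners of $u^*$ and, for each candidate partner $w$, invoke the polynomial-time algorithm for \textsc{CSM}-$\DelAg$-$\MP$ of \citet{boe-bre-hee-nie:j:control-bribery-in-SM} on $(I,\{u^*,w\},\ell)$, answering ``yes'' iff some call succeeds. Your correctness argument (including the observation that the budget and witness set transfer unchanged in both directions) matches the paper's.
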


\begin{proof}
To solve the \textsc{CSM}-$\DelAg$-$\MA$ problem, we will use the polynomial-time algorithm given for \textsc{CSM}-$\DelAg$-$\MP$ by \citet{boe-bre-hee-nie:j:control-bribery-in-SM}. 
Let our input of \textsc{CSM}-$\DelAg$-$\MA$ be an instance $I$ of \textsc{Stable Marriage}, an agent~$a$ in~$I$, and an integer~$\ell$ determining our budget on the number of control actions that can be used, i.e., the number of agents we can delete.

Our approach is simply to check for each possible partner~$b$ of~$a$ whether we can delete at most~$\ell$ agents from~$I$ so that in the remainder the pair $\{a,b\}$ is contained in some stable matching.
Clearly, this holds for some agent~$b$ if and only if our original instance is a ``yes''-instance (that is, there is a set of at most~$\ell$ agents whose deletion from~$I$ yields an instance that admits a stable matching, where $a$ is matched.)
Thus, for each agent~$b$ in the preference list of~$a$, we solve the \textsc{CSM}-$\DelAg$-$\MP$ problem on input $(I,\{a,b\},\ell)$. If we obtain that $(I,\{a,b\},\ell)$ is a ``yes''-instance of \textsc{CSM}-$\DelAg$-$\MP$ for some agent~$b$, then we return ``yes'', otherwise we return ``no.'' It is clear that this algorithm is correct and runs in polynomial time.
\end{proof}

\begin{proposition}
  \label{prop:CSM-DelAcc-SM}
\textsc{CSR}-$\DelAcc$-$\SM$ can be solved in linear time.
\end{proposition}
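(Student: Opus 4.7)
The plan is to establish that the minimum valid $F$ is precisely the set of blocking pairs of $M$ in $I$, and then to count these pairs in linear time. First I would observe that any valid $F$ must be disjoint from $M$, since otherwise some pair in $M$ would cease to be acceptable in $I-F$, making $M$ fail to be a matching there. Next, note that deleting the acceptability of a pair $\{u,v\}$ only removes this pair from $u$'s and $v$'s preference lists and leaves the relative order of all remaining entries unchanged. Consequently, for any $F \subseteq E$ with $F \cap M = \emptyset$, a pair $\{u,v\} \in E \setminus F$ is a blocking pair for $M$ in $I-F$ if and only if it is a blocking pair for $M$ in $I$. Therefore, $M$ is stable in $I-F$ if and only if $F$ contains every blocking pair of $M$ in $I$. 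Since no pair in $M$ can block $M$ by definition, the set of all blocking pairs is automatically disjoint from $M$, and taking $F$ to be exactly this set yields a valid solution of minimum cardinality.

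Given this characterization, the problem reduces to deciding whether the number of blocking pairs of $M$ in $I$ is at most~$\ell$, which I would resolve as follows. First, build for each agent $u$ an array $\mathrm{rank}_u$ giving the position of each $v \in A(u)$ in $\succ_u$, and store $M$ as a partner function indicating $M(u)$ for each $u$ (or a flag marking $u$ as unmatched). Both take time linear in $|U| + \sum_{u\in U} |A(u)|$. Then, for each agent $u$, scan $\succ_u$ and, for each $v$ with $M(u) = \emptyset$ or $v \succ_u M(u)$, use $\mathrm{rank}_v$ to check in $O(1)$ time whether $M(v) = \emptyset$ or $u \succ_v M(v)$; if so, $\{u,v\}$ is a blocking pair. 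Since each blocking pair is detected from both endpoints, half of the total count equals the number of blocking pairs, and the answer is ``yes'' if and only if this number is at most $\ell$. The overall running time is linear in the input size.

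There is no real obstacle in this argument; the only subtle point is the equivalence of blocking pairs in $I$ and in $I-F$, which holds because deletions merely shrink preference lists without reordering them. The rest is a straightforward scan over the preference structure with $O(1)$-time rank lookups.
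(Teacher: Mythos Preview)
Your proposal is correct and follows essentially the same approach as the paper: both reduce the problem to counting the blocking pairs of~$M$ in~$I$ and comparing that number to~$\ell$. You simply spell out more carefully why a valid~$F$ must be disjoint from~$M$ and why blocking pairs are preserved under deletions, and you give explicit details for the linear-time counting procedure, whereas the paper states these observations tersely.
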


\begin{proof}
Let our input for \CSR-$\DelAcc$-$\SM$ be an instance~$I$ of \SR, a matching~$M$ in~$I$, and our budget~$\ell$ on the number of acceptability deletions we can perform. Note that $(I,\ell, M)$ is a ``yes''-instance of \CSR-$\DelAcc$-$\SM$ if and only if the number of blocking pairs for~$M$  is at most~$\ell$: Clearly, we need to delete all blocking pairs to make~$M$ stable, and deleting all of them is sufficient to ensure that $M$ becomes stable. Since we can compute the number of blocking pairs for~$M$ in linear  time, the result follows.

We remark that this simple fact was also observed in~\cite[Observation~1]{boe-bre-hee-nie:j:control-bribery-in-SM}, though stated only  for the bipartite case.
\end{proof}

\section{New Results}
 In this section, we present the results contributed by this paper. 
\label{sec:app-SM-own}
We start with the control action of adding agents in Section~\ref{sec:add-ag}, 
and then proceed with the control action of deleting agents in Section~\ref{sec:del-ag}.

\subsection{Control by adding agents}
\label{sec:add-ag}

Let us turn our focus on control by adding agents. The first result shows that making an agent matched in a stable matching is no easier than making a pair matched.
The reduction is inspired by the one given by \citet{boe-bre-hee-nie:j:control-bribery-in-SM} for $\MP$ case.

\begin{theorem}
\label{thm:CSM-AddAg-MA}
\textsc{CSM}-$\AddAg$-$\MA$ 
and \textsc{CSM}-$\AddAg$-$\PSM$ are $\np$-complete.
\end{theorem}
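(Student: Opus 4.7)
\emph{Plan.} Membership in \np\ is routine: given $W\subseteq U'$ with $|W|\le\ell$, we compute a stable matching of $I-(U'\setminus W)$ by Gale--Shapley and check whether it covers $u^*$ (for $\MA$) or is perfect (for $\PSM$); by the Rural Hospitals theorem the set of matched agents is the same in every stable matching of a \textsc{Stable Marriage} instance, so one stable matching is enough to witness either property.

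For \np-hardness I would give a direct reduction from \textsc{Exact Cover by 3-Sets} (X3C), mirroring the gadget design of \citet{boe-bre-hee-nie:j:control-bribery-in-SM} for \CSM-$\AddAg$-$\MP$. From an X3C instance with universe $X$ of size $3k$ and family $\mathcal{S}$ of $3$-sets, the construction would be a bipartite \textsc{Stable Marriage} instance in which each element $x\in X$ becomes an original element-agent~$e_x$, and each $S_j\in\mathcal{S}$ contributes a cluster of addable set-agents (one per element of $S_j$) acceptable only to the corresponding element-agents and tied together by a small ``all-or-nothing'' sub-gadget forcing any feasible $W$ to include either all three set-agents of a cluster or none. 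The budget is $\ell := 3k$. The target agent $u^*$ sits at the end of a propagation gadget built from auxiliary original agents so that $u^*$ can be matched in a stable matching only once every $e_x$ has been released from a pre-installed default partner by being matched to some activated set-agent; this in turn forces the clusters chosen in $W$ to cover $X$ exactly.

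The same construction should also yield hardness for $\PSM$: I would pad every original agent other than $u^*$ with a private low-ranked fallback partner, so that in every stable matching of $I-(U'\setminus W)$ all original agents except possibly $u^*$ are matched. Then ``the stable matching is perfect'' and ``$u^*$ is matched'' become equivalent conditions, and a single reduction settles both $\MA$ and $\PSM$. \emph{The main obstacle} will be the blocking-pair analysis ruling out partial or non-exact covers: one must tune the ranks of the set-agents in the preference lists of the element-agents (and vice versa) so that any $W$ failing to encode an exact cover---either by leaving some $e_x$ with its default partner, or by covering some $e_x$ twice---necessarily yields a blocking pair somewhere in the cluster or propagation gadget, breaking every stable matching that would have covered $u^*$. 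Establishing this forward direction, together with the converse construction of a stable matching from an exact cover, is where the technical bulk of the proof will lie.
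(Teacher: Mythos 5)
The membership argument is fine and coincides with the paper's: one stable matching computed by Gale--Shapley suffices as a witness by the Rural Hospitals Theorem.

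The hardness part, however, is a plan rather than a proof, and the pieces you defer are exactly the ones that carry the technical weight. Two concrete gaps. First, your ``all-or-nothing'' cluster gadget is unspecified, and it is not clear it can be built: in the $\AddAg$ model an addable agent that is \emph{not} added is simply absent from the instance, so it cannot participate in any blocking pair. Enforcement that ``if one set-agent of a cluster is added then all three must be'' therefore has to come entirely from blocking pairs among \emph{present} agents combined with budget counting, and you give no construction achieving this. Second, the propagation gadget that lets $u^*$ be covered only once every element-agent has been ``released'' is also left open; you yourself flag the blocking-pair analysis ruling out non-exact covers as the main obstacle. As written, the forward direction of the reduction does not exist yet.

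For comparison, the paper sidesteps the need for exactness gadgets by reducing from \textsc{Clique} instead of X3C and replacing gadget-level enforcement with a counting argument. There, $w^*$ accepts only $m^*$, and $m^*$ prefers all $\binom{k}{2}$ selector women to $w^*$; hence $w^*$ is matched only if every selector woman is matched to some edge-man $m_e$. Each such $m_e$ in turn forces the addable men $m'_e$ and $m'_v$ (for both endpoints $v$ of $e$) to be present, and the budget $\ell = k + \binom{k}{2}$ then forces the $\binom{k}{2}$ selected edges to have at most $k$ distinct endpoints, i.e., to form a clique. No ``all-or-nothing'' or exact-coverage machinery is needed. Your padding idea for making $\MA$ and $\PSM$ coincide is similar in spirit to the paper's dummy men, but by itself it does not rescue the missing core of the reduction. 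To complete your route you would need to actually exhibit the cluster and propagation gadgets and verify the blocking-pair analysis; alternatively, switching to the \textsc{Clique}-based counting argument avoids the problem.
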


\begin{proof}
  It is clear that both problems are in~$\np$:
  Given a set of agent additions resulting in a modified instance~$I'$, it suffices to compute a single stable matching in~$I'$ and check whether it is perfect or whether it matches our given agent,
  because by the well-known Rural Hospitals Theorem~\citep{irv-lea-gus:j:efficient-algorithm-for-optimal-stable-marriage},
  every stable matching in~$I'$ covers the same set of agents.

  We present a polynomial-time many-to-one reduction that can be obtained as a modification of the reduction in~\cite[Theorem~1]{boe-bre-hee-nie:j:control-bribery-in-SM}; the modified reduction shows $\np$-hardness for both \textsc{CSM}-$\AddAg$-$\MA$
  and \textsc{CSM}-$\AddAg$-$\PSM$. 
  For convenience, we present the modified reduction in full detail. 

  We reduce from the following $\np$-complete \textsc{Clique} problem~\citep{gar-joh:b:int}.
  \decprob{Clique}
  {An undirected graph~$G$ and a non-negative integer~$k$.}
  {Does~$G$ admit a \myemph{clique} of size at least~$k$, i.e., a complete subgraph~$K$ with at least~$k$ vertices?}

  Let $G=(V,E)$ be the input graph and~$k$ the size of the desired clique. For each vertex~$v \in V$, we introduce a woman~$w_v$ and a man~$m'_v$. For each edge~$e \in E$, we introduce a woman~$w_e$ and two men~$m_e$ and $m'_e$. We further create a set $S=\{s_1,\dots,s_{\binom{k}{2}}\}$ of   \myemph{selector women}, a set $D=\{d_1,\dots,d_{|V|-k}\}$ of  \myemph{dummy men}, a woman~$w^*$, and a man~$m^*$. 
The preferences of the agents are as follows (with each set in the preference lists interpreted as an arbitrary but fixed, strict linear order of its elements): 
\[
\begin{array}{lll}
m^*: & S\succ w^*; \\
w^*: & m^*; \\
s: & \{m_e: e \in E\}\succ m^* & \textrm{ for each $s \in S$}; \\
m_e: & w_e\succ \{w_v: v \in e\}\succ S & \textrm{ for each $e \in E$}; \\
w_e: & m'_e \succ m_e; & \textrm{ for each $e \in E$}; \\
m'_e: & w_e; & \textrm{ for each $e \in E$}; \\
w_v: & m'_v\succ \{m_e:e \ni v\}\succ D & \textrm{ for each $v \in V$}; \\
m'_v: & w_v & \textrm{ for each $v \in V$;} \\
d: & \{m_v: v \in V\} & \textrm{ for each $d \in D$.}
\end{array}
\]

We let~$M'_V\coloneqq \{m'_v:v \in V\}$ and $M'_E \coloneqq \{m'_e:e \in E\}$ and set $M'_V \cup M'_E$ as the set of addable agents, with the remaining agents forming the set of original agents. 
Note that the addable agents are all men.
Our goal in \textsc{CSM}-$\AddAg$-$\MA$ is to match agent~$w^*$, while our goal in \textsc{CSM}-$\AddAg$-$\PSM$ is to obtain a perfect and stable matching. 
Finally, we set our budget, that is, the number of agents from $M'_V \cup M'_E$ that we can add to~$I$, as $\ell \coloneqq k+\binom{k}{2}$. 

To see the correctness of the reduction, first assume that
$G$ admits a clique of order~$k$; let $V(K)$ and $E(K)$ denote its vertices and edges, respectively. 
We show that adding the $k+\binom{k}{2}$ men $\{m'_v:v  \in V(K)\} \cup \{m'_e:e \in E(K)\}$ yields an instance~$I_K$ that admits a perfect and stable matching~$M$ that, in particular, matches~$w^*$ as well. Clearly, every stable matching in the resulting instance~$I_K$ contains all pairs in $F'\coloneqq\{\{m'_v,w_v\}:v  \in V(K)\} \cup \{\{m'_e,w_e\}:e \in E(K)\}$, 
because $F'$  matches every involved agent to its best choice. We construct $M$ as follows.
We match $w^*$ to $m^*$, match all selector women to the men in $\{m_e:e  \in E(K)\}$ in a stable way, add the pairs in the matching~\myemph{$F$} $\coloneqq \{\{m_e,w_e\}:e \in E \setminus E(K)\}$ together with those in~$F'$ to~$M$, 
and finally, we match the women in~$\{w_v:v \in V \setminus V(K)\}$ to the $|V|-k$ dummy men in a stable way.\footnote{By matching a set $X$ of women with a set~$Y$ of men \myemph{in a stable way}, we mean matching them according to an arbitrary stable matching in the SM instance induced by $X \cup Y$. Note that we only do this when preferences in this sub-instance are complete, and hence, each of its stable matchings matches all agents in~$X \cup Y$.}

It is straightforward to verify that $M$ covers every agent and is stable: 
\begin{itemize}
\item No blocking pair contains the distinguished woman~$w^*$, because she is matched with her best choice. %
\item No blocking pair contains a selector woman~$s \in S$ and some man $m_e$: If $e \notin E(K)$, then this follows from $\{m_e,w_e\} \in F$, otherwise we have $M(m_e) \in S$, and thus $\{s,m_e\}$ does not block~$M$ due to the stability of the restriction of $M$ to~$S \cup \{m_e:e \in E(K)\}$.
\item No blocking pair contains woman~$w_e$ corresponding to an edge~$e$, because it is matched to her best choice in~$I_K$: If $e \in E(K)$, then she is matched to~$m'_e$;otherwise, she is matched to~$m_e$ and $m'_e$ is not present.

\item No blocking pair contains a woman~$w_v$: 
If $v \in V(K)$, then it is matched to~$m'_v$, her best choice.
If $v \notin V(K)$, then $M(w_v) \in D$.
In this case, she cannot block with any other dummy agent  due to the stability of the restriction of $M$ to~$D \cup \{w_v:v \in V \setminus V(K)\}$.
Moreover, each non-dummy man acceptable to~$w_v$ is a man~$m_e$ corresponding to an edge incident to~$v$ and hence not in $E(K)$, which means that $m_e$ is matched to his best choice by $\{m_e,w_e\} \in F$. 
\end{itemize}
Since no woman is involved in a blocking pair, $M$ is indeed a stable matching in~$I_K$.

For the other direction, assume that by adding a set of at most~$k+\binom{k}{2}$ men from~$M'_V \cup M'_E$ to~$I$, we obtain an instance~$I'$ that admits a matching~$M$ in which $w^*$ is matched (which always holds if $I'$ admits a perfect matching). 
We first show that in this case we may assume w.l.o.g.\ that $M$ contains no pair $\{m_e,w_v\}$ for some edge~$e$ incident to some vertex~$v$ in~$G$; we call such pairs \myemph{problematic}. Indeed, if ${\{m_e,w_v\} \in M}$, then since $\{m_e,w_e\}$ cannot block~$M$, we know that ${M(w_e)=m'_e}$. Then we modify~$I'$ and~$M$  by removing the addable agent~$m'_e$ from~$I'$ and adding~$m'_v$ instead, setting also
$M'={\big(M \setminus \{\{m_e,w_v\}, \{m_e', w_e\}\}\big) \cup \{\{m_e,w_e\},\{m'_v, w_v\}\}}$. It is clear that the resulting instance~$I''$ can still be obtained by adding at most $k+\binom{k}{2}$ agents to~$I$, and $M'$ is a stable matching in~$I''$ covering~$w^*$ and containing less problematic pairs than~$M$. Hence, repeating this operation as long as necessary, we end up with an instance~$\hat{I}$ and a stable matching~$\hat{M}$ in~$\hat{I}$ that covers~$w^*$ but contains no problematic pairs.
Let $X_V$ and~$X_E$ denote the set of vertices and edges in~$G$ that correspond to the addable men that are present in~$\hat{I}$.

Observe first that $w^*$ can only be matched to~$m^*$, that is, $\{m^*,w^*\} \in \hat{M}$.
This means that all selector women must be matched in~$\hat{M}$, as an unmatched selector woman would form a blocking pair with~$m^*$. Hence, there are exactly $\binom{k}{2}$ men in~$\{m_e:e \in E\}$ matched to selector women; let $P$ denote the set of these men.
Then, $|P|=\binom{k}{2}$.
Note that if $m_e \in P$, then $m'_e$ must be present in~$I'$ and matched to~$w_e$ by~$\hat{M}$ (that is, ${e \in X_E}$), as otherwise $\{m_e,w_e\}$ would block~$\hat{M}$.
Furthermore, for each $m_e\in P$, each woman in~$\{w_v:v  \in e\}$ must also be matched by~$\hat{M}$ to
someone that is more preferred to~$m_e$ as otherwise $\{m_e,w_v\}$ would block~$\hat{M}$.
Since~$\hat{M}$ contains no problematic pair, this means that for each $m_e\in P$ and each~$v\in E$,
the man~$m'_v$ must be present in~$I'$ and matched to~$w_v$ (that is, $v \in X_V$).
Hence, for each edge~$e$ in $\hat{E}\coloneqq \{e:m_e \in P\} \subseteq X_E$, both endpoints of~$e$ must be in~$X_V$. By $|X_E|+|X_V| \leq \binom{k}{2}+k$, we get that the $\binom{k}{2}$ edges in~$\hat{E}$ must have altogether at most $k$ endpoints, i.e., they must form a clique of size~$k$ in~$G$.
\end{proof}

Note that \textsc{CSR}-$\AddAg$-$\MA$ is in~$\np$, because even in an instance of \SR, all stable matchings cover the same set of agents~\citep[Theorem 4.5.2]{gus-irv:b:the-stable-marriage-problem}.
This leads to the following consequence of Theorem~\ref{thm:CSM-AddAg-MA}. 
\begin{corollary}\label{cor:CSR-AddAg-MA}
\textsc{CSR}-$\AddAg$-$\MA$ 
is $\np$-complete.
\end{corollary}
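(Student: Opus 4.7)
The plan is to observe that the corollary follows essentially for free from Theorem~\ref{thm:CSM-AddAg-MA} together with the remark preceding its statement. I would split the argument into two short parts: membership in $\np$ and $\np$-hardness.

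For membership in $\np$, I would use the standard certificate, namely a subset $W \subseteq U'$ of addable agents with $|W| \leq \ell$. To verify the certificate, run Irving's polynomial-time algorithm on $I - (U' \setminus W)$ to decide whether a stable matching exists, and if so, compute one. Because all stable matchings of an \SR\ instance cover exactly the same set of agents (the \SR\ analogue of the Rural Hospitals Theorem, cited in the paragraph before the corollary as~\citep[Theorem~4.5.2]{gus-irv:b:the-stable-marriage-problem}), it suffices to check whether this single stable matching covers the distinguished agent~$u^*$; the answer is independent of which stable matching we happened to compute. This entire check runs in polynomial time, so \textsc{CSR}-$\AddAg$-$\MA$ is in $\np$.

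For $\np$-hardness, I would give a trivial reduction from \textsc{CSM}-$\AddAg$-$\MA$, which is $\np$-complete by Theorem~\ref{thm:CSM-AddAg-MA}. Every instance $(I, u^*, \ell)$ of \textsc{CSM}-$\AddAg$-$\MA$ is syntactically also an instance of \textsc{CSR}-$\AddAg$-$\MA$: we just forget the bipartition and regard the two sides as one set $U \cup U'$ of agents whose acceptability graph happens to be bipartite. The preferences, the set of addable agents, the distinguished agent $u^*$, and the budget $\ell$ are all unchanged. Since an acceptability graph that is bipartite cannot acquire non-bipartite stable matchings, the set of stable matchings in $I - (U' \setminus W)$ is identical whether we view it as an SM instance or as an SR instance, so the same subsets $W \subseteq U'$ witness ``yes'' in both problems. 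Hence $\np$-hardness transfers from \textsc{CSM}-$\AddAg$-$\MA$ to \textsc{CSR}-$\AddAg$-$\MA$.

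There is essentially no obstacle to this proof; the only nontrivial ingredient is the \SR\ Rural Hospitals Theorem used for efficient verification, and this is already supplied by the citation in the text. Combining the two parts yields the claimed $\np$-completeness.
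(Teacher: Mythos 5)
Your proposal is correct and matches the paper's argument: the paper likewise establishes membership in $\np$ via the fact that all stable matchings of an \textsc{Stable Roommates} instance cover the same set of agents~\citep[Theorem~4.5.2]{gus-irv:b:the-stable-marriage-problem}, and obtains hardness directly from Theorem~\ref{thm:CSM-AddAg-MA} since every \textsc{Stable Marriage} instance is a special case of \textsc{Stable Roommates}. Nothing further is needed.
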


In contrast to the marriage case, making a matching stable by adding a fewest number of agents in the roommates setting is NP-hard.
\begin{theorem}\label{thm:CSR-AddAg-SM}
\textsc{CSR}-$\AddAg$-$\SM$ is $\np$-complete.
\end{theorem}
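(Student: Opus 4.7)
My plan is to reduce from \textsc{Vertex Cover} on general (not necessarily bipartite) graphs, which is a natural choice since the corresponding marriage problem \textsc{CSM}-$\AddAg$-$\SM$ is in $\p$ (so hardness must exploit non-bipartiteness). Membership in $\np$ will follow from a single structural observation that pins down a unique canonical candidate matching to test for each guess.

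For $\np$-membership I would argue that, once a subset $W\subseteq U'$ of addables to include has been fixed, the only $M'\subseteq M$ worth testing for stability is $M_W:=M\cap(U\cup W)^2$. Indeed, deleting any pair $\{u,v\}$ from a matching leaves both endpoints unmatched, and since $\{u,v\}$ is certainly acceptable (they are $M$-partners) the pair $\{u,v\}$ itself immediately becomes blocking in the shrunken matching. So among all admissible subsets of $M$ only $M_W$ can be stable, and stability of one fixed matching is checkable in linear time.

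For $\np$-hardness, given a \textsc{Vertex Cover} instance $(G,k)$ with $G=(V,E)$, I would build the SR instance with original agents $U:=V$ and addable agents $U':=\{v^*:v\in V\}$, and set $M:=\{\{v,v^*\}:v\in V\}$, which is a perfect matching on $U\cup U'$. The preferences are deliberately minimal: each addable $v^*$ declares only $v$ acceptable, while each original $v$ declares $v^*$ and all of its $G$-neighbours acceptable, with $v^*$ ranked first and the neighbours in any fixed order. The budget is set to $\ell:=k$.

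For correctness, let $S:=\{v:v^*\in W\}$, so $|S|=|W|$ and $M_W$ matches each $v\in S$ to $v^*$ and leaves every $v\notin S$ unmatched. No $M$-pair can block, because whenever it is present both its endpoints are matched to their top choice. The only remaining acceptable pairs are the $G$-edges $\{u,v\}\in E$, and such a pair blocks $M_W$ iff both endpoints are unmatched (a matched endpoint prefers its starred partner), i.e.\ iff $u,v\notin S$. Hence $M_W$ is stable iff $S$ is a vertex cover of $G$, which combined with $|S|\le k$ yields the desired equivalence. The main thing to get right is not the combinatorial core—which is essentially transparent thanks to the minimal preferences—but the conceptual step of ruling out proper subsets of $M_W$; pleasantly, this is exactly what the $\np$-membership observation achieves, so the same idea serves both halves of the argument.
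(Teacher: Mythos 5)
Your proposal is correct, but it uses a genuinely different and noticeably leaner construction than the paper. The paper also reduces from an \textsc{Independent Set}/\textsc{Vertex Cover}-type problem, but with a six-agents-per-vertex gadget ($a_v,b_v,c_v$ and their copies): the edge structure of $G$ is encoded in the mutual acceptability of the \emph{addable} agents $a'_u,a'_v$, and the budget $2|V|-k$ creates a trade-off where a selected vertex costs one addable agent while an unselected vertex costs two (because leaving $a_v$ unmatched forces both $b'_v$ and $c'_v$ to be added, lest $\{a_v,b_v\}$ or $\{a_v,c_v\}$ block). You instead place the edges of $G$ directly among the \emph{original} agents, so that an uncovered edge becomes a blocking pair between two unmatched originals, which lets you use two agents per vertex and budget exactly $k$. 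Both constructions exploit non-bipartiteness as they must (yours degenerates to a bipartite instance exactly when $G$ is bipartite, where \textsc{Vertex Cover} is polynomial, consistent with \textsc{CSM}-$\AddAg$-$\SM$ being in $\p$). Your preliminary observation that the only candidate matching is $M_W=M\cap (U\cup W)^2$ --- since dropping a pair of $M$ whose endpoints are both present leaves two mutually acceptable unmatched agents --- is correct and is what makes the reverse direction of your reduction go through cleanly; the paper does not isolate this as a lemma but performs the analogous reasoning inside its gadget. One minor remark: this observation is not actually needed for $\np$-membership, since one can simply guess both $W$ and $M'$ and verify stability directly; its real role is in the correctness of the hardness reduction.
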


\begin{proof}
It is clear that the problem is in $\np$, so we give a polynomial-time many-to-one reduction from the following $\np$-complete problem~\textsc{Independent Set} to prove its $\np$-hardness. 
  \decprob{Independent Set}
  {An undirected graph~$G$ and a non-negative integer~$k$.}
  {Does~$G$ admit a subset of at least~$k$ vertices that form an \myemph{independent set}, that is,
    these vertices are pairwise non-adjacent?}

Let $G=(V,E)$ and $k$ be our input of \textsc{Independent Set}. 

For each vertex~$v \in V$, we introduce three agents~$a_v,b_v,c_v$ and their respective copies $a'_v,b'_v,c'_v$.
We let the input matching~$M$ match each agent in~$A=\{a_v,b_v,c_v:v \in V\}$ with its copy in $A'=\{a'_v,b'_v,c'_v:v \in V\}$. 
Let the preferences of the agents be defined by setting the following for each $v \in V$:
\[
\begin{array}{llll}
a_v: & a'_v\succ b_v \succ c_v; \qquad\qquad
  & a'_v: & \{a'_u:\{u,v\} \in E\}\succ a_v \\
b_v: & b'_v\succ a_v;
  & b'_v: & b_v; \\
c_v: & c'_v\succ a_v;
  & c'_v: & c_v. \\
\end{array}
\]
We set $A'$ as the set of addable agents, and we set $\ell=2|V|-k$ as our budget.

First assume that $G$ contains a set~$X$ of~$k$ vertices that form an independent set. Let $I_X$ denote the instance obtained by adding the agents in \[A'_X\coloneqq\{a'_v:v \in X\} \cup \{b'_v,c'_v:v \in V \setminus X\}.\]
Thus, $I_X$ is the instance induced by the agent set $A \cup A'_X$. 
Note that $|A'_X|=|X|+2|V \setminus X|=\ell$.

Define the matching \[M_X\coloneqq\{\{p,p'\}:p \in A,p' \in A'_X\}\] in~$I_X$.
Observe that $M_X \subseteq M$ is stable in~$I_X$:
\begin{itemize}
\item No agent~$a_v$ is contained in a blocking pair: If $v \in X$, then $a_v$ is matched to~$a'_v$, her best choice in~$I_X$.
If $v \in V \setminus X$, then $b_v$ and $c_v$ are the only agents acceptable for~$a_v$ in~$I_X$, and they are both matched to their respective copies, which are their respective best choice in~$I_X$.
\item Neither $b_v$ nor $c_v$ is contained in a blocking pair: If $v \in X$, then $a_v$ is matched to its best choice (its copy~$a'_v$), so does not form a blocking pair with~$b_v$ or~$c_v$; If $v \in V \setminus X$, then both $b_v$ and $c_v$ are matched to their copies which are their respective best choice.
\item No blocking pair contains two agents from $A'_X$, say $a'_v$ and $a'_u$: if $u,v \in X$, then $a'_v$ and $a'_u$ are not acceptable to each other, because $\{u,v\} \notin E$ since $X$ is an independent set.
\end{itemize}
This shows that our instance of \textsc{CSR}-$\AddAg$-$\SM$ is a ``yes''-instance.

For the other direction, assume that there is a set $A^\star \subseteq A'$ of at most $\ell$ agents such that the instance~$I'$ induced by the agents in~$A \cup A^\star$ contains a matching $M' \subseteq M$ that is stable in~$I'$.
We claim that $a'_v \in A^\star$ or $\{b'_v,c'_v\} \subseteq A^\star$ holds for each vertex $v \in V$. 
To see this, assume $a'_v \notin A^\star$. This means that $a_v$ is not matched by~$M'$. 
Hence, $M'$ must match both $b_v$ and~$c_v$, as otherwise $\{a_v,b_v\}$ or $\{a_v,c_v\}$ would block~$M'$. Therefore, we get $\{b'_v,c'_v\} \subseteq A^\star$, as required.

Let us define $X\coloneqq\{v:a'_v \in A^\star\}$. Note that \[2|V|-k=\ell \geq |A^\star| \geq |X|+2 |V \setminus X|=2|V| -|X|\]
by the previous paragraph, so we get $|X| \geq k$. It remains to see that $X$ is an independent set: indeed, if two vertices~$u$ and~$v$ of~$X$ are adjacent in~$G$, then $M'$ matches $a_u$ and~$a_v$ with their respective copies; however, then $\{a'_v,a'_u\}$ is a blocking pair for~$M'$.
Hence, $X$ is an independent set of size at least~$k$ in~$G$, proving the correctness of our reduction.
\end{proof}

\begin{theorem}
\label{thm:CSR-AddAg-ExistsSM}
\textsc{CSR}-$\AddAg$-$\ESM$ and \textsc{CSR}-$\AddAg$-$\PSM$ are $\np$-complete.
\end{theorem}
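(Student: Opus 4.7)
The plan is to prove NP-completeness for both problems via a single unified polynomial-time many-to-one reduction from \textsc{Independent Set}. Membership in $\np$ is immediate: given a candidate set $W \subseteq U'$ with $|W| \leq \ell$, we can verify in polynomial time whether $I - (U' \setminus W)$ admits a (perfect) stable matching by running Irving's algorithm and then checking the resulting stable matching.

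For hardness, given an \textsc{Independent Set} instance $(G=(V,E),k)$, the plan is to construct an \SR\ instance as follows. For each $i \in \{1,\dots,k\}$, I introduce three original \emph{triangle agents} $p_i, q_i, r_i$ with preferences chosen so that the triangle on $\{p_i,q_i,r_i\}$ alone is the classical ``rock--paper--scissors'' instance admitting no stable matching, namely $q_i : r_i \succ p_i$ and $r_i : p_i \succ q_i$. For each vertex $v \in V$, I introduce one addable \emph{helper} $H_v$. The preferences are designed so that (i) each $p_i$ ranks every helper $H_v$ (in some fixed order) above $q_i$ and $r_i$, giving $p_i$ an ``escape'' from the unstable triangle; and (ii) each helper $H_v$ ranks $\{H_u : \{u,v\} \in E\}$ above $p_1 \succ p_2 \succ \cdots \succ p_k$, so that an edge between two added helpers creates a blocking pair whenever those helpers are matched to triangle agents. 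The budget is set to $\ell \coloneqq k$.

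The \emph{forward direction} is proved by explicit construction: given an independent set $S = \{v_1,\dots,v_k\}$ in $G$, we add the helpers $\{H_v : v \in S\}$ and match each $p_i$ with a distinct helper according to a stable matching of the complete bipartite Stable Marriage sub-instance on $\{p_1,\dots,p_k\} \cup \{H_v : v \in S\}$ (whose existence is guaranteed by Gale--Shapley), together with $\{q_i,r_i\}$ for each~$i$. This matching covers all $4k$ agents and is stable: within a triangle, $p_i$ is matched to a preferred helper; between triangles, acceptability is absent; between added helpers, acceptability is absent because $S$ is an independent set; and stability in the bipartite sub-instance holds by construction. For the \emph{backward direction}, a straightforward case analysis of the possible matchings restricted to $\{p_i,q_i,r_i\}$ shows that no stable matching of the modified instance can leave any $p_i$ unmatched or matched within its triangle; hence each $p_i$ must be matched to some $H_v \in W$, which forces $|W| \geq k$ and thus $|W| = k$. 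The set $S \coloneqq \{v : H_v \in W\}$ must then form an independent set, since otherwise two adjacent helpers $H_u, H_v \in W$ would form a blocking pair by design.

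The main technical obstacle is the careful handling of the bipartite sub-matching between the $p_i$'s and the added helpers, which requires invoking Gale--Shapley to obtain a stable matching in the sub-instance; the remaining blocking-pair verification is routine gadget-by-gadget. Because the stable matching produced in the forward direction automatically covers all agents in the modified instance, the same reduction simultaneously establishes $\np$-hardness for both \textsc{CSR}-$\AddAg$-$\ESM$ and \textsc{CSR}-$\AddAg$-$\PSM$.
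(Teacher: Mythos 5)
Your proposal is correct and is essentially the paper's own reduction under a renaming: your triangle agents $p_i,q_i,r_i$ and addable helpers $H_v$ (with helpers preferring adjacent helpers to all $p_i$'s, and budget $k$) correspond exactly to the paper's $s_i,a_i,b_i$ and addable vertex agents $v$ with preferences $N(v)\succ s_1\succ\dots\succ s_k$, and both directions of the correctness argument proceed identically. The only point worth making explicit in the $\np$-membership argument is that checking a single stable matching returned by Irving's algorithm suffices for $\PSM$ because all stable matchings of a \textsc{Stable Roommates} instance cover the same set of agents.
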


\begin{proof}
It is clear that both problems are in $\np$, because checking whether an  instance~$I$ of SR admits a stable matching can be done in polynomial time~\citep{irv:j:efficient-algorithm-for-stable-roommates-problem}, and since all stable matchings in~$I$ have the same size~\citep{gus-irv:b:the-stable-marriage-problem}, finding any stable matching by, e.g., Irving's algorithm~\citep{irv:j:efficient-algorithm-for-stable-roommates-problem} suffices to check whether a perfect and stable matching exists.

We present a reduction from the \textsc{Independent Set} problem which shows the $\np$-hardness of both problems. Let the graph $G=(V,E)$ and the integer $k$  be our input for \textsc{Independent Set}. For each vertex $v \in V$, let $N(v)$ denote the set of vertices adjacent to~$v$ in~$G$.

We construct an instance of \SR\ over agent set $A\coloneqq V \cup \{s_j,a_j,b_j:j=1,\dots,k\}$. The preferences of these agents are as follows (with each set in the preference lists interpreted as an arbitrary but fixed, strict linear order of its elements): 
\[
\begin{array}{lll}
v: & N(v)\succ  s_1\succ \dots \succ s_k
  & \textrm{ for each $v \in V$;} \\
s_i: & V \succ a_i \succ  b_i 
  & \textrm{ for each $i \in \{1,\dots,k\}$;} \\
a_i: & b_i\succ  s_i
  & \textrm{ for each $i \in \{1,\dots,k\}$;} \\  
b_i: & s_i\succ  a_i 
  & \textrm{ for each $i \in \{1,\dots,k\}$.}   
\end{array}
\]
The set of original agents is~$A \setminus V$, and $V$ is the set of addable agents.
We set $k$ as our budget, i.e., the number of agents that can be added to~$I$ from among~$V$.

We claim that $G$ contains an independent set of size~$k$ if and only if there are $k$ agents in~$V$ whose addition to~$I$ yields in instance that contains a stable matching; moreover, such a stable matching is necessarily perfect.

Assume first that there is an independent set~$X$ of size~$k$ in~$G$.
Let $I'$ be the SR instance obtained by adding the agents of~$X$ to~$I$. We show that $I'$ admits a stable and perfect matching. Let us match the agents of $S\coloneqq\{s_1,\dots,s_k\}$ to agents of~$X$ in a stable manner: Since the sub-instance of~$I$ induced by $S \cup X$ is an instance of \textsc{Stable Marriage} with complete preference lists, there exists a stable matching~$M_X$ that covers all  agents in~$S \cup X$ and admits no blocking pair formed by agents in~$S \cup X$.
Then the matching $M\coloneqq\{\{a_i,b_i\}:i =1,\dots,k\} \cup M_X$  covers all agents in~$I'$ and is also stable in~$I'$. 

Assume now that there is a set~$X$ of at most~$k$ agents in~$V$ whose addition to~$I$ yields an instance~$I'$ that admits a stable matching~$M$. Observe that for each $i \in \{1,\dots,k\}$, the matching $M$ must match $s_i$ to an agent in~$V$ due to the circular preferences of $a_i,b_i,$ and $s_i$. Thus, $X$ must contain exactly~$k$ agents from~$V$. It suffices now to note that $X$ forms an independent set in~$G$: assuming that two vertices of~$X$, say $x$ and $y$, are adjacent, the pair $\{x,y\}$ would block~$M$, a contradiction proving the correctness of our reduction.
\end{proof}

\subsection{Control by deleting agents}
\label{sec:del-ag}
Let us now present our last result for control by deleting agents.
\citet{boe-bre-hee-nie:j:control-bribery-in-SM} showed that making a pair included in a stable matching is polynomial-time solvable for the marriage setting. 
We extend their approach to the roommates setting.
The crucial idea is to use the concept of stable partitions by \citet{tan:j:stable-partitions} and a result by \citet{tan-hsu:j:-generalization-of-the-stable-matching-problem} regarding deleting an agent; see also \citep[Sections~4.3.2 and~4.3.4]{man:b:algorithmics-of-matching-under-preferences} for the necessary background on stable partitions.

\begin{theorem}\label{thm:CSR-DelAg-MP+MA}
\textsc{CSR}-$\DelAg$-$\MP$ and \textsc{CSR}-$\DelAg$-$\MA$ are in~$\p$.
\end{theorem}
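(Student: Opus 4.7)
The plan is a polynomial-time Turing reduction from $\MA$ to $\MP$, followed by a direct polynomial-time algorithm for $\MP$ built on stable partitions. For the reduction, recall that in any \SR\ instance, all stable matchings cover the same set of agents. Hence $(I,u^*,\ell)$ is a yes-instance of \textsc{CSR}-$\DelAg$-$\MA$ iff there exists some agent~$v$ acceptable to $u^*$ such that $(I,\{u^*,v\},\ell)$ is a yes-instance of \textsc{CSR}-$\DelAg$-$\MP$. We solve the latter for every acceptable partner $v$ of $u^*$ and return yes iff at least one call succeeds; this uses at most $|U|$ calls.

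For \textsc{CSR}-$\DelAg$-$\MP$ with input $(I,\{u^*,v^*\},\ell)$, we first note that $u^*,v^* \notin W$ must hold, and that any matching $M'$ of $I-W$ that is stable and contains $\{u^*,v^*\}$ has to satisfy the following: every agent~$x$ still present in $I-W$ that $u^*$ prefers to~$v^*$ is matched by $M'$ to an agent ranked strictly above~$u^*$ in $x$'s list, and symmetrically for~$v^*$. We encode these ``match-above'' constraints by truncating, for each such~$x$, the preference list of~$x$ at the position of~$u^*$ (respectively~$v^*$); this yields a modified sub-instance~$\hat J$ on $U\setminus\{u^*,v^*\}$ together with a distinguished set~$T$ of agents that must be covered by any feasible stable matching. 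The problem thus reduces to: find a set $W\subseteq U\setminus\{u^*,v^*\}$ with $|W|\le \ell$ such that $\hat J - W$ admits a stable matching covering every agent of $T\setminus W$.

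To solve this covering variant, we rely on Tan's stable partitions: every \SR\ instance admits a stable partition whose structure determines precisely which agents are covered by its stable matchings, and a stable matching exists iff the partition contains no odd cycle. The result of Tan and Hsueh on agent deletion describes how the stable partition transforms under a single removal; in particular, deleting a well-chosen agent lying on an odd cycle eliminates that cycle. A polynomial-time algorithm can then decide whether a deletion set~$W$ of size at most~$\ell$ exists that simultaneously breaks every odd cycle of the stable partition of~$\hat J$ and ensures every required agent in~$T$ ends up in a pair of the resulting stable partition (or is itself deleted).

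The main obstacle is showing that the two requirements---breaking odd cycles and covering the agents of~$T$---are combinatorially compatible: a deletion aimed at fixing one could in principle cascade unfavourably through the stable-partition reshuffling. I would address this with a structural lemma distilled from the Tan--Hsueh framework, guaranteeing that each odd cycle admits an agent whose deletion also respects the covering demand, so that an exchange or greedy argument over stable partitions produces an optimum solution in polynomial time.
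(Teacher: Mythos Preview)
Your overall architecture coincides with the paper's: the Turing reduction from $\MA$ to $\MP$, the truncation of preference lists to encode the ``match-above'' constraints around the distinguished pair, the passage to a stable-partition formulation, and the invocation of the Tan--Hsueh deletion result. The paper keeps $a$ and $b$ in the truncated instance~$I^\star$ (where they become each other's top choice) rather than removing them as you do, but this is cosmetic.

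The genuine gap is your final paragraph. You correctly isolate the subproblem---delete at most~$\ell$ agents so that the truncated instance admits a stable matching covering all remaining agents of~$T$---but you then defer to an unstated ``structural lemma'' and an unspecified ``exchange or greedy argument.'' That is precisely the part that needs to be done, and your worry about cascading suggests you have not yet seen why it is easy. The paper's resolution is a one-line counting argument, not a structural lemma: the agents of $T$ (your set; $A^\star\cup B^\star$ in the paper) that are singletons in the stable partition~$\Pi$ of the truncated instance are themselves odd parties. Thus the ``forbidden'' odd parties---those of size~$\geq 3$ together with the $T$-singletons---number exactly $r+|U|$, and the answer is simply whether $r+|U|\le\ell$. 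The lower bound is immediate from Tan--Hsueh (each agent deletion destroys at most one odd party, so at least $r+|U|$ deletions are needed to eliminate all forbidden ones), and the upper bound comes from deleting one agent per long odd party plus every $T$-singleton, after which standard facts about stable partitions guarantee a stable matching covering the surviving agents of~$T$. There is no tension between ``breaking odd cycles'' and ``covering~$T$'': the uncoverable agents of~$T$ \emph{are} odd parties, so both requirements are governed by the same count.
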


\def\J{\mathcal{J}}
\begin{proof}
We present a polynomial-time algorithm for the \textsc{CSR}-$\DelAg$-$\MP$ problem. Using the arguments of Proposition~\ref{prop:CSM-DelAg-MA}, this implies the polynomial-time solubility of \textsc{CSR}-$\DelAg$-$\MA$ as well.
As already mentioned, we will use the structural properties of stable partitions  and ideas from the algorithm given by \citet{boe-bre-hee-nie:j:control-bribery-in-SM} for the bipartite version of our problem, \textsc{CSM}-$\DelAg$-$\MP$.

Let $\J=(I,\ell, \{a,b\})$ be our input instance where
$I$ is an instance of \SR, 
$\ell$ is our budget, i.e., the number of agents we are allowed to delete, and
$\{a,b\}$ is the (acceptable) pair in~$I$ that we aim to include in the desired matching. We say that a set~$S$ of agents is a \myemph{solution} for~$\J$ if removing $S$ from~$I$ yields an instance, denoted by $I-S$, where some stable matching contains~$\{a,b\}$.

Let $A^\star$ and $B^\star$  be the set of agents which are preferred by~$a$ to~$b$, or by~$b$ to~$a$, respectively (note that $A^\star$ and~$B^\star$ may not be disjoint). Furthermore, let $F$ contain those pairs $\{x,y\}$ where 
\begin{compactenum}[(i)]
  \item $x\in A^{\star}$, and $y=a$ or $x$ prefers $a$ to~$y$, or 
  \item $x\in B^{\star}$, and $y=b$ or $x$ prefers $b$ to~$y$.
\end{compactenum}
Formally, let 
\begin{align*}
F\coloneqq & 
\{\{x,y\} \mid x\in A^{\star}, (y=a \textup{ or } a \succ_x y)\} \\
& \cup 
\{\{x,y\}\mid x\in B^{\star}, (y=b \textup{ or } b \succ_x y)\}.
\end{align*}
Let $I^\star$ denote the instance obtained from~$I$ by deleting all pairs in~$F$ from the set of acceptable pairs,
i.e., $I^{\star}\coloneqq I-F$.

Our algorithm is the following: 
\begin{itemize}
\item[Step 1.] Compute the set~$F$ and the instance~$I^\star$.
\item[Step 2.] Compute a stable partition~$\Pi$ for~$I^\star$, and let $U$ denote the set of singleton parties in~$U$ consisting of an agent in~$A^\star \cup B^\star$. 
\item[Step 3.] Return ``yes'' if $r+|U| \leq \ell$ where $r$ is the number of odd parties of size at least~$3$ in~$\Pi$; otherwise, return ``no.''
\end{itemize}

Note that Step~2 can be performed in linear time using Tan's algorithm~\citep{tan:j:stable-roommates} for computing a stable partition. 
Hence, the algorithm runs in polynomial time. It remains to show its correctness. 
We start with the following claim.

\begin{claim}
\label{clm:characterize-sols}
 A set $S$ of agents is a solution for~$\J$ if and only if $I^\star-S$ admits a stable matching that matches every agent in~$(A^\star \cup B^\star) \setminus S$.
\end{claim}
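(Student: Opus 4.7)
The plan is to prove both directions by transferring stable matchings between $I-S$ and $I^\star-S$, exploiting the following structural feature of $I^\star$: for every $x\in A^\star$, the deletion of $F$ removes from $x$'s acceptability list the agent $a$ together with every $y$ with $a\succ_x y$; consequently, in $I^\star$ the only partners $x$ finds acceptable are those strictly preferred to $a$ by $x$, and analogously for $B^\star$. In particular, $a$'s acceptable partners in $I^\star$ are exactly $b$ together with agents $a$ ranks below $b$, and likewise for $b$. Throughout, we may assume $a,b\notin S$ without loss of generality.

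For the forward direction, suppose $M$ is stable in $I-S$ with $\{a,b\}\in M$. First, every $x\in(A^\star\cup B^\star)\setminus S$ is matched by $M$: an unmatched $x\in A^\star\setminus S$ would make $\{a,x\}$ a blocking pair since $x\succ_a b=M(a)$, and symmetrically for $B^\star$. Second, I would check $M\subseteq E(I^\star-S)$. The pair $\{a,b\}$ is not in $F$ because $a\notin A^\star$ and $b\notin B^\star$; for any other $\{x,y\}\in M$ with $x\in A^\star$, stability of $M$ forces $y\succ_x a$ (else $\{a,x\}$ would block), so $\{x,y\}$ is not removed by the $A^\star$-clause, and when $y\in B^\star$ a symmetric argument using the non-blocking pair $\{b,y\}$ rules out the $B^\star$-clause. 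Third, stability transfers downward: any pair blocking $M$ in $I^\star-S$ would also block $M$ in the larger instance $I-S$, contradicting the assumed stability.

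For the reverse direction, suppose $M'$ is stable in $I^\star-S$ and matches every $x\in(A^\star\cup B^\star)\setminus S$. I would first argue $\{a,b\}\in M'$: by the structural observation, any $I^\star$-acceptable partner of $a$ different from $b$ is strictly worse for $a$ than $b$, and symmetrically for $b$; hence if $\{a,b\}\notin M'$, then $\{a,b\}$ (which is acceptable in $I^\star$) would block $M'$. Next, I would show $M'$ is stable in $I-S$. Any pair $\{x,y\}$ blocking $M'$ in $I-S$ but not blocking in $I^\star-S$ must lie in $F$, so up to relabeling via the $A^\star$-clause, $x\in A^\star$ and ($y=a$ or $a\succ_x y$). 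But $x\in A^\star\setminus S$, so $M'$ matches $x$ to some partner acceptable in $I^\star$, which by the structural observation satisfies $M'(x)\succ_x a$. Combined with $y=a$ or $a\succ_x y$, this yields $M'(x)\succ_x y$, contradicting $\{x,y\}$ blocking $M'$; the $B^\star$-clause case is symmetric.

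The main obstacle is the bookkeeping around the two clauses of $F$ and handling edges whose endpoints fall in different combinations of $A^\star$, $B^\star$, and the rest of $U$. Once the structural observation ``every $I^\star$-partner of an $A^\star$-agent is strictly preferred to $a$ by that agent'' is isolated, both directions reduce to routine case analysis built on the non-blocking conditions at $\{a,b\}$.
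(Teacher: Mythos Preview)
Your proof is correct and follows essentially the same approach as the paper: show that a stable matching of $I-S$ containing $\{a,b\}$ avoids $F$ and covers $(A^\star\cup B^\star)\setminus S$, and conversely that a stable matching of $I^\star-S$ covering this set must contain $\{a,b\}$ and remain stable in $I-S$. Your reverse direction is in fact more thorough than the paper's, which only verifies $\{a,b\}\in M$ and leaves the check that $M$ is stable in the larger instance $I-S$ implicit.
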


\begin{claimproof}
Assume that $S$ is a solution for~$\J$. Then there exists a stable matching~$M$ in~$I \setminus S$ containing $\{a,b\}$. First, let us show that $M \cap F=\emptyset$. Suppose for the sake of contradiction that $\{x,y\} \in M \cap F$. Then $x \in A^\star \cup B^\star$, by symmetry we may assume $x \in A^\star$. By the definition of~$F$, we infer that either $y=a$ or $x$ prefers~$a$ to~$y$. The former is not possible, because $M(a)=b \notin A^\star$, and the latter is not possible either, because it would mean that $\{x,a\}$ blocks~$M$: $x$ prefers~$a$ to~$y=M(x)$ and $a$ prefers~$x$ to~$M(a)=b$ by the definition of~$A^\star$. This contradiction proves that $M$ is a matching disjoint from~$F$, i.e., it is a matching in~$I^\star-S$. 

It remains to see that $M$ matches every agent in the set~$(A^\star \cup B^\star) \setminus S$. Suppose for the sake of contradiction that $x \in (A^\star \cup B^\star) \setminus S$ is unmatched by~$M$. By symmetry, we may assume $x \in A^\star$. However, then the pair $\{x,a\}$ is a blocking pair, a contradiction. Thus, $M$ has the desired property.

For the other direction, assume that for some agent set~$S$, the instance $I^\star -S$ admits a stable matching~$M$ that matches every agent in~$(A^\star \cup B^\star) \setminus S$. In this case, $M$ matches $a$ to~$b$ because they are each other's best choice in~$I^\star$ due to the deletion of the acceptability of all pairs in~$F$.
\end{claimproof}
 
To show the correctness of our algorithm, assume first that it returns ``yes''. Let $S$ contain an arbitrarily chosen agent from each odd party of size at least~$3$ in~$\Pi$ and all singleton parties consisting of an agent in~$A^\star \cup B^\star$.  
Then deleting $S$ from~$I^\star$ yields an instance~$I'$ whose stable partition~$\Pi'$ has no odd parties and whose singleton parties are all disjoint from~$A^\star \cup B^\star$, due to well-known properties of stable partitions~\citep{tan:j:stable-partitions} (see also \citep[Proposition~4.11]{man:b:algorithmics-of-matching-under-preferences}). This means that in all stable matchings for~$I^\star-S$, the set of unmatched agents (i.e., those agents that form singletons in~$\Pi'$) is disjoint from~$A^\star \cup B^\star$. By Claim~\ref{clm:characterize-sols}, $S$ is therefore a solution, and since it has size at most~$\ell$ by Step~3, we obtain that $\J$ is a ``yes''-instance.
 
It remains to prove that whenever $S$ is a solution of size at most~$\ell$, then the algorithm returns ``yes''.
Since $S$ is a solution, by Claim~\ref{clm:characterize-sols} we know that $I^\star-S$ admits a stable matching that matches all agents in~$(A^\star \cup B^\star) \setminus S$. This means that the stable partition~$\Pi_S$ of~$I^\star-S$ contains no odd parties of size at least~$3$, and contains no singleton odd parties formed by some agent in~$(A^\star \cup B^\star) \setminus S$; let us call such odd parties \myemph{forbidden}. However, it is known that deleting an agent from an instance of \SR\ results in an instance whose stable partitions contain all odd parties of the original instance except, possibly, for one (see \cite[Corollary~3.9]{tan-hsu:j:-generalization-of-the-stable-matching-problem}). 
Since $\Pi$ has exactly $|U|+r$ forbidden odd parties, deleting less than~$r+|U|$ agents from~$I^\star$ will result in an instance that has at least one forbidden odd party. This implies $\ell \geq |S| \geq r+|U|$, implying that  the algorithm returns ``yes'' in Step~3.
\end{proof}

\section*{Acknowledgement}
We are grateful for J\"org Rothe for enjoyable and fruitful discussions, and for his invitation for preparing a survey on control problems in the area of matching under preferences;  the work contained in this paper originated in our efforts to write that survey. 
J.~Chen is supported by the Vienna Science and Technology Fund (WWTF)~[10.47379/VRG18012].
I.~Schlotter is supported by the Hungarian Academy of Sciences under its Momentum Programme
(LP2021-2) and its J\'anos Bolyai Research Scholarship.

\bibliographystyle{ACM-Reference-Format}
\bibliography{control}

\end{document}